\documentclass[12pt,a4paper]{article}

% Essential packages
\usepackage{amsmath}          % For mathematical symbols
\usepackage{graphicx}         % For including graphics
\usepackage{cite}             % For citation management
\usepackage{geometry}         % For page margins
\usepackage{hyperref}         % For hyperlinks
\usepackage{authblk}
\usepackage{bm}
\usepackage{amsthm}
\usepackage{amsfonts}
\usepackage{booktabs}
\usepackage[ruled]{algorithm2e}

\newtheorem{theorem}{Theorem}
\newtheorem{lemma}{Lemma}

\allowdisplaybreaks

% Set page margins
\geometry{top=1in, bottom=1in, left=1in, right=1in}

% Title, author, and date
\title{Approximating N-Player Nash Equilibrium through Gradient Descent\footnote{The International Joint Conference on Theoretical Computer Science (IJTCS), 2020, Online Version.}}

\author[1]{Dongge Wang}
\author[2]{Xiang Yan}
\author[1]{Zehao Dou}
\author[2]{Wenhan Huang}
\author[1]{Yaodong Yang}
\author[1]{Xiaotie Deng}

\affil[1]{Peking University}
\affil[2]{Shanghai Jiao Tong University}
\affil[ ]{\texttt{dgwang96@pku.edu.cn}, 
\texttt{xyansjtu@163.com}, \texttt{zehaodou@pku.edu.cn}, 
\texttt{rowdark@sjtu.edu.cn}, \texttt{yaodong.yang@pku.edu.cn}, \texttt{xiaotie@pku.edu.cn}}

\date{}

\begin{document}

% Generate title
\maketitle
% \vspace{-2mm}

% Abstract
\begin{abstract}
Decoding how rational agents should behave in shared systems remains a critical challenge within theoretical computer science, artificial intelligence and economics studies. Central to this challenge is the task of computing the solution concept of games, which is Nash equilibrium (NE).
% Nash equilibrium has long been a desired solution concept in economics and game-theoretical studies.
% Although the related complexity literature has closed the door to computing the exact equilibrium efficiently, approximation methods are still sought after in both economic and learning fields, especially stimulated by the developments in multi-agent learning these years.
Although computing NE in even two-player cases are known to be PPAD-hard, approximation solutions are of intensive interest in the machine learning domain.
% In this paper, we present a new approach to obtain approximate Nash equilibrium in any N-player normal-form zero-sum game with discrete action spaces, which is also applicable to solve any general-sum N-player game by adding a fictitious player.
In this paper, we present a gradient-based approach to obtain approximate NE in N-player general-sum games.
Specifically, we define a distance measure to an NE based on pure strategy best response, thereby computing an NE can be effectively transformed into finding the global minimum of this distance function through gradient descent.
We prove that the proposed procedure converges to NE with rate $O(1/T)$ ($T$ is the number of iterations) when the utility function is convex.
% Furthermore, we prove that proposed distance measure is convex and its gradient is \emph{Lipschitz} continuous with constant $L>0$ when the utility function of each player is convex; therefore, the procedure converges to NE with learning rate $\gamma_t\leq 1/L$ \yaodong{change the arguemet from learning rate to convergence rate pls: e.g.  $\mathcal{O}(1/T)$}.
% Our approach defines a new measure for the distance between the current joint strategy profile and the Nash equilibrium.
% The computing process transforms the task from computing the equilibrium into finding a global minimization solution to the distance function.
% We solve it based on a gradient descent algorithm and further prove its convergence to a unknown Nash equilibrium. 
% We compare our gradient-based methods against the state-of-the-art NE solvers .
Experimental results suggest our method outperforms Tsaknakis-Spirakis algorithm, fictitious play and regret matching on various types of N-player normal-form games in GAMUT.
% We compare our gradient-based methods against the state-of-the-art NE solvers on various types of N-player normal-form games in GAMUT. Experiment results suggest our method outperform Tsaknakis-Spirakis algorithm, fictitious play and regret matching by a large margin.
In addition, our method demonstrates robust performance with increasing number of players and number of actions.
% Comparing with the state-of-the-art baselines, the experimental results show significant and consistent improvement in approximate Nash equilibrium computing on randomly generated games and multiple GAMUT tasks. We further demonstrate the robustness of the algorithm by increasing both the number of players and the number of actions. 
\end{abstract}

% Keywords
% \textbf{Keywords:} Keyword1, Keyword2, Keyword3

% Introduction
\section{Introduction}
% Nash equilibrium~\cite{nash1950equilibrium} has been a popular and fundamental solution concept in a number of application fields, such as Internet price war in economic market~\cite{li2019latent}, IDS of security problems~\cite{chan2014computing}, and evolutionary stable state equilibrium in social networks~\cite{zhang2014local}.
% As a result, seeking efficient algorithms for Nash equilibrium computation has received broad attention especially with the development of artificial intelligence approaches to play a key role in adversarial scenarios, such as the GO~\cite{silver2017mastering}, Poker~\cite{brown2018superhuman}, and generative adversarial networks~\cite{goodfellow2014generative}.

% As a desirable  solution concept in game scenario with multiple participants,
% it has assumed an important role in game theory analysis and in deriving guaranteed performance in applications.
% While it was proved that the problem of finding an exact Nash equilibrium is PPAD-complete even for 2-player games\cite{chen2009settling}, attention has been focused on finding approximate ones, a popular bounded rational solution concept.

% In games with multiple rational players, 
% Nash equilibrium~\cite{nash1950equilibrium} has long been a classic and fundamental solution concept. 
Nash equilibrium \cite{nash1950equilibrium} describes the fix point in which no player has the motivation to change its own strategy.
Since John Nash's first introduction in the 1950s, Nash equilibrium has become a pivotal mathematical tool in modeling economic decision making. 
The vast developments of social activities on the Internet, along with a booming number of strategic participants, have made  Nash equilibrium imperative to real-world applications such as online marketing~\cite{niyato2016economics}, advertising~\cite{maehara2015budget}, crowdsourcing~\cite{kang2017sequential}, and sharing economy~\cite{amar2015game}.
%The attention of the machine learning domain is also drawn to the problem of equilibrium computation.
In the machine learning domain, there is also  an increasing interest in computing Nash equilibrium. 
A recent example  is the GAN model~\cite{goodfellow2014generative}, whose training objective is in fact to find the Nash equilibrium in a $2$-player zero-sum game. %

% \yaodong{each application at most use 1-2 references} 
% \cite{niyato2016economics,yu2016optimal,wang2017bayesian,zhou2017game,yu2017incentive}, advertising \cite{maehara2015budget}, crowdsourcing \cite{kang2017sequential,moshfeghi2016identifying,salk2016assessing,lu2018game} and sharing economy \cite{amar2015game,kung2016platform,kalathil2017sharing,cheng2019novel}.

%%%%%%%%%%%%%%%
%redundant
% The wide range usage of Nash equilibrium has in turn challenged the scientific community 
% to understand it and improve its computation for the better economic benefit and better social welfare.   
% \yaodong{add a paragraph here why how NASH is impacting machine learning/AI and computer science.}
% The extensively applied GAN model~\cite{goodfellow2014generative} introduces 2-player zero-sum NE in the architecture, which has drawn the attention of machine learning domain to equilibrium strategy.
% Start from Nash Q-learning~\cite{hu2003nash}, N-player NE becomes a central solution in multi-agent learning.

%%%%%%%%%%%%%%%%%
% redundant
%Nash equilibrium characterizes the terminal state in these game theoretical approaches, and plays an important role both in theoretical analysis and in deriving competitive performance.

Despite its importance, computing or approximating Nash equilibrium efficiently is still an open challenge.
In the sense of computational complexity, it was proven that finding a Nash equilibrium for a game is PPAD-complete, even it only involves two players \cite{chen2009settling}. 
% plenty of literatures have to restrict their applications to 2-player zero-sum scenarios.
As a result, recent studies have focused on finding approximation solution to NE.
Among the many solutions, fictitious play~\cite{berger2007brown} and regret matching~\cite{hart2000simple} are two representative approaches, as they are simple to implement and are shown to maintain sublinear convergence to NE in two-player zero-sum games.
Although these methods can be naturally applied to games with more than two players or two-player general-sum games, the convergence guarantee does not hold anymore.
For example,  fictitious play triggers chaos or cyclic behaviors in terms of its empirical frequency when applied to two-player general-sum games \cite{krishna1998convergence}; similar patterns are also discovered on regret matching method~\cite{mazumdar2019policy}.

One possible explanation about the non-convergence of fictitious play and regret matching methods on general-sum games is that they treat each player individually \cite{Papadimitriou2018from}. Precisely, they collect the empirical frequency of each player and adopt a simple update oracle to make each rational player behave as a self-adaptive agent.
The idea behind such a design is referred as an individual view, in which each player maximizes its own payoff regardless of the whole system.
This is not aligned with the principle of Nash equilibrium, which describes the stable state after competitive interaction between players, unless the played game is a two-player zero-sum game, each player knows its opponent's payoff as the reciprocal to its own payoff.

In this paper, we propose a gradient-based solver to compute approximate NE in N-Player general-sum games through a global view. Unlike aforementioned methods, our method optimizes all players' strategies simultaneously.
Based on the property of best-response strategy, we present a distance measure, named \textbf{NashD}, to compute the distance between the players' joint strategy profile and a Nash equilibrium. The zero value of \textbf{NashD} is proven to be a sufficient and necessary condition for reaching a Nash equilibrium.
% In this paper, we develop a new distance measure, which we name by  \textbf{NashD}, to compute the distance between the players' joint strategy profile and that of a Nash equilibrium.
%We prove the existence of a Nash equilibrium.
% Our distance measure is proposed based on the observation that each player's best response strategy against its opponents has its expected utility exactly equal to the maximum among the utilities achieved when the player adopts pure strategy throughout the whole action space \yaodong{this sentence need rewriting}.
The proposed procedure optimizes the joint strategy to NE in an iterative manner by applying gradient descent.
% A critical benefit of the \textbf{NashD} measure is that, one can now apply gradient-based methods to optimise the Nash equilibrium in an iterative manner.
We show the rate of convergence of our approach by first proving the convexity for the distance function and the \emph{Lipschitz} continuity for its gradient.
% We show the convergences of our approach with the proper learning rate under assumptions that players in the game have discrete action spaces, by proving the convexity for the loss function as well as the \emph{Lipschitz} continuity for its gradient.
To verify our approach, we compare the gradient-based method to the state-of-the-art NE solvers on a group of commonly used games called GAMUT~\cite{nudelman2004run}. We further test the robustness of the proposed algorithm by increasing the size of games to a large scale.
% To show the performance of our new approach in practical applications, we first present the learning process in an example of 2-player matrix game, for which several algorithms
% %may not be accurate in its approximation and their performance
% may not achieve a very close approximation.
% Then we compare the average approximate ratio to equilibria between our approach and baseline algorithms in randomly generated games as well as a group of commonly used games called GAMUT~\cite{nudelman2004run}.

% In most game instances of every experiment, our approach outperforms all the baselines within the same fixed iteration rounds, and maintains robustness as the numbers of both players and actions increase.

\section{Related Works}
% In this section, we review previous literature related to finding (approximate) Nash equilibrium. 

% According to their key contributions, these approaches can be divided into three aspects.

% \subsection{Optimization Approaches for Nash Equilibrium}
% There are many works dealing with the problem for finding a Nash equilibrium by transforming it into some optimization problems.

% Most of them focus on some special classes of games, especially 2-player games.

The most famous algorithm for a 2-player game is the Lemke–Howson algorithm~\cite{lemke1964equilibrium}, which finds exact Nash equilibrium through solving the corresponding linear complementary problem.
The polynomial-time state-of-the-art approximation approach provided by \cite{tsaknakis2007optimization} guarantees a theoretical bound of $0.3393$.
% They define an equivalent optimization problem in the strategy spaces of both players and attempt to obtain a local optimum of a specific function that measures the maximum deviation of the players’ payoffs from the best payoff each player could achieve given the strategy chosen by the other.
% And in their approach, the descent direction is calculated through a linear programming.
They define an equivalent optimization problem in both players' joint strategy spaces and calculate the descent direction through a linear program.
Following their approach, \cite{deligkas2017computing} extend the idea to any polymatrix game, which contains multiple players and the payoff for each player is derived from multiple bimatrix games.
Their approach is proved to guarantee at most $1/2$-approximate Nash equilibrium in any polymatrix game.

Instead of theoretically proving the upper bound for the approximate ratio, many empirical studies present performance evaluation for popular Nash solvers.
Plenty of experiments are conducted on randomly-generated bimatrix~\cite{tsaknakis2008performance}, GAMUT bimatrix games~\cite{fearnley2015empirical}, and polymatrix games based on various graphs~\cite{deligkas2016empirical}.
Empirically, the gradient-based approaches mentioned above perform much better than the theoretical worst bound, especially better than those approaches based on other techniques, e.g. mixed integer programming~\cite{sandholm2005mixed} and heuristic search~\cite{porter2008simple}.
This motivates further studies for applying gradient-based algorithms to find Nash equilibrium in general games.

% Other approaches adapt techniques including mixed integer programming~\cite{sandholm2005mixed} and heuristic search~\cite{porter2008simple}.
% Empirically, their performance fluctuate a lot when facing different games, which makes it hard to compare them with approaches mentioned above.

% A tree-based method is provided to find all $\epsilon$-Nash equilibrium in a normal-form multiplayer game within exponential computation time \cite{berg2017exclusion}.

Other attempts for Nash solvers focus on games beyond normal-form or perfect information. 
For example, in extensive-form games, where players choose their actions alternatively, it is possible that participants only have imperfect information about opponents' joint strategy.
A series of works have been done to deal with such a difficulty by dividing the whole game into several subgames through a concept of information set (or further utilizing neural networks to do classification), each of which represents a circumstance condition on the imperfect information.
Then approaches, either based on those mentioned above~\cite{gilpin2007gradient,lockhart2019computing} or other methods including fictitious play and regret matching~\cite{zinkevich2008regret,heinrich2015fictitious}, can be applied to solve these subgames independently with some modification.
Since any extensive-form game can be uniquely transformed into an equivalent normal-form game, we only need to compare our approach with the basic methods of them on normal-form games, as we do in experiments.

\section{Discrete N-Player Normal-Form Game}\label{notation}
% In this paper, we concentrate on general-sum games in normal-form, where each player has a discrete action space and acts simultaneously.

We denote a game instance as $G=(N, A, \bm{\bm{\sigma}}, u)$ and explain each item as follows.
\begin{itemize}
    \item $N$: the player set (or agent set). $N$ denotes the number of players as well for writing simplicity.
    \item $A$: the combinatorial action space. Each player has an action set $A_i$ with $i\in N$. For player $i$, We denote a specific action as $a_i\in A_i$ and the number of actions as $|A_i|$. An action is also referred to as a pure strategy which will be used interchangeably. A pure strategy profile $a=(a_1,a_2,\dots,a_N)$ represents one play of the game with player $i$ taking its corresponding action $a_i$. The action space $A=A_1\times A_2\times\dots\times A_N$ is a Cartesian product that contains all possible pure strategy profiles.
    \item $\bm{\bm{\sigma}}$: a mixed strategy profile. Each player can take stochastic action by adopting a mixed strategy, which is a probability distribution over its action set. We use $\bm{\sigma}_i\in \Delta(A_i)$ to denote a mixed strategy for player $i$ where $\Delta(A_i)$ represents the probability space. $\bm{\sigma}_i(a_i)$ denotes the probability assigned to action $a_i$ with $\sum_{a_i\in A_i}{\bm{\sigma}_i(a_i)}=1$. A mixed strategy profile $\bm{\sigma}=(\bm{\sigma}_1,\bm{\sigma}_2,\dots,\bm{\sigma}_N)$ represents one play of the game with player $i$ taking its corresponding strategy $\bm{\sigma}_i$. As pure strategies are special mixed strategy cases with 1.0 probability on the chosen action and zero on other actions, we often simply refer to a mixed strategy as a strategy. When analysing player $i$, other players' strategy is denoted by $\bm{\sigma}_{-i}=(\bm{\sigma}_1,\dots,\bm{\sigma}_{i-1},\bm{\sigma}_{i+1},\dots,\bm{\sigma}_N)$ which is a joint probability distribution over $A_{-i}=A_1\times\dots\times A_{i-1}\times A_{i+1}\times\dots\times A_N$.
    \item $u$: the utility function (also called payoff). In one play of the game, utility functions $u_1,u_2,\dots,u_N$ characterize the achieved value for each player. The utility of a mixed strategy $u_i(\bm{\sigma}_i, \bm{\sigma}_{-i})$ takes the expectation of the utility of pure strategies $u_i(a_i,a_{-i})$ as $u_i(\bm{\sigma}_i,\bm{\sigma}_{-i})=E_{a_i\sim\bm{\sigma}_i,a_{-i}\sim\bm{\sigma}_{-i}}{[u_i(a_i,a_{-i})]} =\sum_{a_i\in A_i}\sum_{a_{-i}\in A_{-i}}{\bm{\sigma}_i(a_i)\bm{\sigma}_{-i}(a_{-i})u_i(a_i,a_{-i})}$.
\end{itemize}

In this paper, we concentrate on N-Player general-sum games. For each player, the action space is discrete since the number of its pure strategies is countable. All game instances considered are in normal-form, in which all players take actions simultaneously for only one round.

% In a normal-form game $G$ with $N$ players, we denote each player's action space as $A_i$ for $i \in \{1,2,\dots,N\}$.
% Each member of $A_i$, $a_i \in A_i$, is one of the pure strategies for player $i$.
% We denote by $|A_i|$  the number of actions in the space.
% A mixed strategy of player $i$, denoted by $\bm{\sigma}_i$, is a distribution over its action space that assigns probability to each pure strategy.
% Specifically,  $\bm{\sigma}_i(a_i)$ represents the probability player $i$ chooses action $a_i$, and $\sum_{a_i\in A_i}{\bm{\sigma}_i(a_i)}=1$.
% A strategy profile is a tuple consists of strategies of all players in the game, and is denoted by $\bm{\sigma}=(\bm{\sigma}_1,\bm{\sigma}_2,\dots,\bm{\sigma}_N)$.
% On behalf of player $i$, the other players' joint strategy is denoted by $\bm{\sigma}_{-i}=(\bm{\sigma}_1,\dots,\bm{\sigma}_{i-1},\bm{\sigma}_{i+1},\dots,\bm{\sigma}_N)$, as a joint distribution on $A_{-i}=A_1\times\dots\times A_{i-1}\times A_{i+1}\times\dots\times A_N$.

% The utility of player $i$ is denoted by $u_i(\bm{\sigma}_i, \bm{\sigma}_{-i})$, which is the expected utility of $u_i(a_i,a_{-i})$ for $a_i\sim\bm{\sigma}_i,a_{-i}\sim\bm{\sigma}_{-i}$.
% In precise,
% \begin{align}
%     \begin{aligned}
%         u_i(\bm{\sigma}_i,\bm{\sigma}_{-i})&=E_{a_i\sim\bm{\sigma}_i,a_{-i}\sim\bm{\sigma}_{-i}}{[u_i(a_i,a_{-i})]} \\
%         &=\sum_{a_i\in A_i}\sum_{a_{-i}\in A_{-i}}{\bm{\sigma}_i(a_i)\bm{\sigma}_{-i}(a_{-i})u_i(a_i,a_{-i})}.
%     \end{aligned}
% \end{align}

\paragraph{Best response (BR).} Given a strategy profile $\bm{\sigma}$, the best response of player $i$ is the strategy (or strategies) which can achieve the maximal utility when other players' strategies are fixed. We denote the best response strategies as,
\begin{align}\label{BR}
    BR(\bm{\sigma}_{-i}) = \{ \mathop{\arg\max}\limits_{\bm{\sigma}_i\in \Delta(A_i)} u_i(\bm{\sigma}_i,\bm{\sigma}_{-i}) \}.
\end{align}

\paragraph{$\epsilon$-Approximate Nash equilibrium ($\epsilon$-NE).}
Nash equilibrium (NE) is the strategy profile when multiple perfectly rational agents interact in a game scenario.
A strategy profile $\bm{\sigma}^*$ becomes a Nash equilibrium if and only if each item $\bm{\sigma}_i^*$ is a best response strategy.
Due to the maximality of BR, no player can obtain better utility by changing its strategy unilaterally. 
% The formulation of NE is as follows,
% \begin{align}
%     \begin{gathered}
%         u_i(\bm{\sigma}^*_i,\bm{\sigma}^*_{-i}) - u_i(\bm{\sigma}_i,\bm{\sigma}^*_{-i}) \geq 0
%         \\ \forall\bm{\sigma}_i\in \Delta(A_i), ~\forall i\in N.
%     \end{gathered}
% \end{align}
Formally, $\bm{\sigma}^*$ is an NE if $\forall i\in N, \forall\bm{\sigma}_i\in \Delta(A_i), $
\begin{align}
%     \begin{gathered}
        u_i(\bm{\sigma}^*_i,\bm{\sigma}^*_{-i}) - u_i(\bm{\sigma}_i,\bm{\sigma}^*_{-i}) \geq 0
%         \\ \forall\bm{\sigma}_i\in \Delta(A_i), ~\forall i\in N.
%     \end{gathered}
\end{align}
% If we remove the condition “for arbitrary $i$” from the equation, it transforms to the definition of the best response $\bm{\sigma}^*_i$ correspondent to $\bm{\sigma}^*_{-i}$.
% It is known that NE is hard to compute~\cite{chen2009settling}; therefore plenty of game-theoretic scenarios adopt approximate solutions. A standard $\epsilon$-approximate Nash equilibrium is a strategy profile satisfying the following equation,
Since in general computing an NE for any game is hard \cite{chen2009settling}, we focus on the corresponding approximate solution. Formally, $\bm{\sigma}^*$ is $\epsilon$-NE if $\forall i\in N, \forall\bm{\sigma}_i\in \Delta(A_i),$
\begin{align}\label{approNE}
        u_i(\bm{\sigma}_i,\bm{\sigma}^*_{-i}) - u_i(\bm{\sigma}^*_i,\bm{\sigma}^*_{-i})  \leq \epsilon
\end{align}
% Furthermore, a strategy profile $\bm{\sigma}^*$ is an $\epsilon$-approximate Nash equilibrium, if it satisfies the following equation.
% \begin{align}\label{NashEquilibrium}
%     u_i(\bm{\sigma}^*_i, \bm{\sigma}^*_{-i}) \geq (1 - \epsilon) u_i(\bm{\sigma}_i, \bm{\sigma}^*_{-i})~~ \forall \bm{\sigma}_i,i
% \end{align}
% \begin{align}\label{approNE}
%     u_i(\bm{\sigma}^*_i, \bm{\sigma}^*_{-i}) \geq u_i(\bm{\sigma}_i, \bm{\sigma}^*_{-i}) - \epsilon 
%     \\\forall \bm{\sigma}_i,i
% \end{align}
% \begin{align}
%     \begin{gathered}\label{approNE}
%         u_i(\bm{\sigma}^*_i,\bm{\sigma}^*_{-i}) - u_i(\bm{\sigma}_i,\bm{\sigma}^*_{-i}) \geq \epsilon
%         \\ \forall\bm{\sigma}_i\in \Delta(A_i), ~\forall i\in N.
%     \end{gathered}
% \end{align}
% Approximation algorithms have been developed to chase smaller $\epsilon$ value which we generally treat as a better approximation solution to NE.
Generally speaking, the smaller $\epsilon$ value an algorithm achieves, the better approximation solution to NE it obtains.

% %% compare with extensive-form game
% Finally, note that some games may consist of a sequential interaction among players, which is called extensive-form game.
% Since any extensive-form game can be transferred to an unique normal-form game, it suffices to focus on normal-form games.

% \clearpage

\section{Distance Measure to Nash Equilibrium}

In this section, we analyze the property of Nash Equilibrium in zero-sum games, based on which we propose our measure \textbf{NashD} to describe the distance between any strategy profile and a Nash equilibrium.

%First, consider the relationship between each player's pure and mixed strategies.
%According to the definition, the utility of player $i$'s mixed strategy $\bm{\sigma}_i$ is a linear combination of utilities of its pure strategies when fixing the opponents' strategy $\bm{\sigma}_{-i}$.

% By definition of the utility function, when the opponents' strategy is $\bm{\sigma}_{-i}$, the utility of player $i$'s mixed strategy $\bm{\sigma}_i$ is

For any player $i\in N$, consider a strategy profile $(\bm{\sigma}_i, \bm{\sigma}_{-i})$.
The utility of player $i$ can be written as

\begin{align}\label{MixExpectedUtility}
    u_i(\bm{\sigma}_i, \bm{\sigma}_{-i}) &= E_{a_i\sim \bm{\sigma}_i}[u_i(a_i,\bm{\sigma}_{-i})] \\
    &= \sum_{a_i\in A_i} \bm{\sigma}_i(a_i)u_i(a_i,\bm{\sigma}_{-i}).
\end{align}

For fixed opponents' strategy $\bm{\sigma}_{-i}$, $u_i(BR(\bm{\sigma}_{-i}), \bm{\sigma}_{-i})$ denotes the expected utility of player $i$ when taking any corresponding best response strategy.
Notice that even if there are multiple best response strategies, they lead to the same utility.
In fact, such a utility equals to the utility when player $i$ takes some pure strategy.

\begin{lemma}\label{BRPureMax}
    % For each player $i$, given a fixed strategy $\bm{\sigma}_{-i}$, the corresponding best response $BR(\bm{\sigma}_{-i})$ has its expected utility $u_i(BR(\bm{\sigma}_{-i}), \bm{\sigma}_{-i})$ exactly equal to the maximum among the utilities achieved when player $i$ adopts its pure strategy throughout the whole action space.
    For each player $i$, given a fixed strategy $\bm{\sigma}_{-i}$,
    $$u_i(BR(\bm{\sigma}_{-i}), \bm{\sigma}_{-i}) = \max_{a_i\in A_i} u_i(a_i, \bm{\sigma}_{-i})$$
\end{lemma}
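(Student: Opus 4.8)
The plan is to exploit the fact that the mixed-strategy utility in~\eqref{MixExpectedUtility} is linear in the mixing weights $\bm{\sigma}_i(a_i)$: it is the convex combination $\sum_{a_i\in A_i}\bm{\sigma}_i(a_i)\,u_i(a_i,\bm{\sigma}_{-i})$ of the finitely many pure-strategy payoffs $u_i(a_i,\bm{\sigma}_{-i})$, with weights forming a probability distribution on $A_i$. Maximizing a linear function over the simplex $\Delta(A_i)$ is attained at a vertex, and the vertices of $\Delta(A_i)$ are exactly the pure strategies. Since $BR(\bm{\sigma}_{-i})$ is defined in~\eqref{BR} as the $\arg\max$ over all of $\Delta(A_i)$, the value $u_i(BR(\bm{\sigma}_{-i}),\bm{\sigma}_{-i})$ equals the maximum of the objective over the simplex, so the lemma is just the statement that this simplex-maximum coincides with the maximum over vertices. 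I would prove the two inequalities separately.

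For the upper bound, write $M=\max_{a_i\in A_i}u_i(a_i,\bm{\sigma}_{-i})$, which exists because $A_i$ is finite. For any $\bm{\sigma}_i\in\Delta(A_i)$, using $u_i(a_i,\bm{\sigma}_{-i})\le M$ termwise together with $\bm{\sigma}_i(a_i)\ge 0$ and $\sum_{a_i\in A_i}\bm{\sigma}_i(a_i)=1$ gives
\begin{align}
u_i(\bm{\sigma}_i,\bm{\sigma}_{-i})=\sum_{a_i\in A_i}\bm{\sigma}_i(a_i)\,u_i(a_i,\bm{\sigma}_{-i})\le\sum_{a_i\in A_i}\bm{\sigma}_i(a_i)\,M=M. \nonumber
\end{align}
Since this holds for every $\bm{\sigma}_i$, and a best response maximizes the left-hand side over $\Delta(A_i)$, we obtain $u_i(BR(\bm{\sigma}_{-i}),\bm{\sigma}_{-i})\le M$.

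For the reverse inequality, let $a_i^\star\in\arg\max_{a_i\in A_i}u_i(a_i,\bm{\sigma}_{-i})$ and consider the pure strategy that places probability $1$ on $a_i^\star$; this is a legitimate element of $\Delta(A_i)$, and its mixed utility is exactly $u_i(a_i^\star,\bm{\sigma}_{-i})=M$. Because $BR(\bm{\sigma}_{-i})$ attains the maximum of $u_i(\cdot,\bm{\sigma}_{-i})$ over the whole simplex, its value is at least that of this particular pure strategy, giving $u_i(BR(\bm{\sigma}_{-i}),\bm{\sigma}_{-i})\ge M$. Combining the two inequalities yields the claimed equality. I do not expect a genuine obstacle here, as the argument is elementary; the only point requiring care is the logical bookkeeping—keeping the direction of each inequality straight and explicitly justifying that a pure strategy is admissible in $\Delta(A_i)$, which is precisely the remark made before the lemma that pure strategies are special cases of mixed strategies.
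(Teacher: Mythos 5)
Your proof is correct and follows essentially the same route as the paper's: one inequality from the fact that the mixed utility $\sum_{a_i}\bm{\sigma}_i(a_i)u_i(a_i,\bm{\sigma}_{-i})$ is a convex combination bounded by the maximal pure payoff, and the other from the fact that the best response maximizes over all of $\Delta(A_i)$, which contains the pure strategies. The extra framing about linear optimization over the simplex attaining its maximum at a vertex is a fine observation but adds nothing beyond the paper's two-inequality argument.
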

\begin{proof}
    % The best response strategy of player $i$ satisfies
    % \begin{align*}\label{BR}
    %     u_i(BR(\bm{\sigma}_{-i}), \bm{\sigma}_{-i}) \geq u_i(\bm{\sigma}_i, \bm{\sigma}_{-i}), \forall \bm{\sigma}_i.
    % \end{align*}
    
    % Since each pure strategy $a_{i} \in A_{i}$ is a special mixed strategy, we have
    By the definition of the best response strategy, 
    $$u_i(BR(\bm{\sigma}_{-i}), \bm{\sigma}_{-i}) \geq u_i(a_i, \bm{\sigma}_{-i}),\forall a_{i} \in A_{i}.$$
    Thus,
    $$u_i(BR(\bm{\sigma}_{-i}), \bm{\sigma}_{-i}) \geq \max_{a_i\in A_i} u_i(a_i, \bm{\sigma}_{-i}).$$
    Due to the fact $\sum_{a_i\in A_i} \bm{\sigma}_i(a_i)=1$, from Eqn. (\ref{MixExpectedUtility}) we have
    $$u_i(BR(\bm{\sigma}_{-i}), \bm{\sigma}_{-i}) \leq \max_{a_i\in A_i} u_i(a_i, \bm{\sigma}_{-i}).$$
    These together complete the proof.
\end{proof}

% \par According to Theorem \ref{BRPureMax}, given randomized strategy $\bm{\sigma}_{-i}$, the utility of its correspondent best response can be calculated by finding the maximal utility among all possible pure strategies. Built upon this view, we deduce Theorem \ref{Ctheorem}.

Based on this observation, we introduce \textbf{NashD} function:
$$\textbf{NashD}(\bm{\sigma})=\sum_{i\in N} \max_{a_i\in A_i} u_i(a_i, \bm{\sigma}_{-i}).$$
This is a non-negative function and its zero-points are precisely coincident with the Nash equilibrium.

\begin{theorem}\label{Ctheorem}
    In a zero-sum game $G$, for any strategy profile $\bm{\sigma}=(\bm{\sigma}_i,\bm{\sigma}_{-i})$, $\textbf{NashD}(\bm{\sigma}) \geq 0$, and $\bm{\sigma}$ is an NE if and only if $\textbf{NashD}(\bm{\sigma}) = 0$.
\end{theorem}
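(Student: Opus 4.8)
The plan is to reduce everything to the defining property of a zero-sum game, namely that $\sum_{i\in N} u_i(\bm{\sigma}) = 0$ for every strategy profile $\bm{\sigma}$, combined with the fact that a best response never does worse than the strategy actually played. First I would invoke Lemma~\ref{BRPureMax} to rewrite the measure as a sum of best-response utilities, $\textbf{NashD}(\bm{\sigma}) = \sum_{i\in N} u_i(BR(\bm{\sigma}_{-i}), \bm{\sigma}_{-i})$. The central trick is to subtract the zero-sum identity $\sum_{i\in N} u_i(\bm{\sigma}_i, \bm{\sigma}_{-i}) = 0$ from this expression, yielding
$$\textbf{NashD}(\bm{\sigma}) = \sum_{i\in N}\left[ u_i(BR(\bm{\sigma}_{-i}), \bm{\sigma}_{-i}) - u_i(\bm{\sigma}_i, \bm{\sigma}_{-i}) \right].$$
By the maximality of the best response, every summand is non-negative, so $\textbf{NashD}(\bm{\sigma}) \geq 0$ follows at once.

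For the equivalence I would keep working from this same decomposition into non-negative terms. In the forward direction, if $\bm{\sigma}$ is an NE then each $\bm{\sigma}_i$ is itself a best response, so by Lemma~\ref{BRPureMax} every bracketed difference vanishes and hence $\textbf{NashD}(\bm{\sigma}) = 0$. For the converse, if $\textbf{NashD}(\bm{\sigma}) = 0$ then a sum of non-negative quantities equals zero, which forces every term to be zero individually; that is, $u_i(\bm{\sigma}_i, \bm{\sigma}_{-i}) = \max_{a_i\in A_i} u_i(a_i, \bm{\sigma}_{-i})$ for all $i$, which is exactly the statement that each player already plays a best response. Matching this against the NE inequality completes that direction.

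The argument is short, and the only genuine subtlety is recognizing that the zero-sum hypothesis is precisely what turns the raw sum of best-response utilities into a manifestly non-negative \emph{gap}; without it one could only subtract an arbitrary constant and would lose all control of the sign, so non-negativity would fail in general. The remaining pieces — the elementary ``sum of non-negatives is zero iff each term is zero'' step and the identification of the per-player best-response condition with the definition of Nash equilibrium — are routine and require no further machinery.
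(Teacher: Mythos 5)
Your proof is correct and takes essentially the same route as the paper's: both arguments rest on the zero-sum identity $\sum_{i\in N} u_i(\bm{\sigma}) = 0$ combined with Lemma~\ref{BRPureMax}, comparing $\textbf{NashD}(\bm{\sigma})$ against the (zero) sum of realized utilities so that non-negativity and both directions of the equivalence follow from best-response maximality. The only cosmetic difference is that you prove the direction $\textbf{NashD}(\bm{\sigma})=0 \Rightarrow \text{NE}$ directly, by forcing each non-negative per-player gap to vanish, whereas the paper argues the contrapositive (a profitable deviation for some player makes the sum strictly positive); the underlying content is identical.
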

\begin{proof}
$G$ is zero-sum means $\sum_{i\in N} u_i(\bm{\sigma}_i, \bm{\sigma}_{-i}) = 0$.
% According to the definition of zero-sum game,
% \begin{align}\label{ZeroSum}
%     \sum_{i\in N} u_i(\bm{\sigma}_i, \bm{\sigma}_{-i}) = 0.
% \end{align}
% Combining Eqn. \ref{ZeroSum} with the property of best response strategy in Eqn. (\ref{BR}), we obtain
% \begin{align*}
%     \sum_{i\in N} u_i(BR(\bm{\sigma}_{-i}),\bm{\sigma}_{-i}) \geq 0.
% \end{align*}
Then by Lemma \ref{BRPureMax},
$$\sum_{i\in N} \max_{a_i\in A_i} u_i(a_i, \bm{\sigma}_{-i})\geq \sum_{i\in N} u_i(\bm{\sigma}_i, \bm{\sigma}_{-i})= 0. $$

If $\bm{\sigma}$ is not an NE, there must exist player $k$ and $\bm{\sigma}^{\prime}_k$ with

    % \exists k, \bm{\sigma}^{\prime}_k \quad s.t. \
$$u_k(\bm{\sigma}^{\prime}_k,\bm{\sigma}_{-k}) > u_k(\bm{\sigma}_k, \bm{\sigma}_{-k}).$$
Thus,
$$u_k(BR(\bm{\sigma}_{-k}), \bm{\sigma}_{-k}) > u_k(\bm{\sigma}_k, \bm{\sigma}_{-k}).$$
And for any other player $r\in N\setminus \{k\}$,
$$u_r(BR(\bm{\sigma}_{-r}), \bm{\sigma}_{-r}) \geq u_r(\bm{\sigma}_r, \bm{\sigma}_{-r}).$$
Combining them we have
$$\sum_{i\in N} u_i(BR(\bm{\sigma}_{-i}),\bm{\sigma}_{-i}) > \sum_{i\in N} u_i(\bm{\sigma}_i, \bm{\sigma}_{-i}) = 0,$$
that is equivalently $\textbf{NashD}(\bm{\sigma}) > 0$.

If $\bm{\sigma}$ is an NE, it means $\forall i\in N$, $\bm{\sigma_i}\in BR(\bm{\sigma_i})$.
Then 
$$\sum_{i\in N}u_i(BR(\bm{\sigma}_{-i}),\bm{\sigma}_{-i}) =\sum_{i\in N} u_i(\bm{\sigma}_i, \bm{\sigma}_{-i}) =0,$$
that is equivalently $\textbf{NashD}(\bm{\sigma})=0$.
% In turn, if $\textbf{NashD}(\bm{\sigma}) > 0$ for some $\bm{\sigma}$. 
% According to the definition of Nash equilibrium, if $\bm{\sigma}=(\bm{\sigma}_i,\bm{\sigma}_{-i})$ is a Nash equilibrium, then 
% $$\bm{\sigma}_i = BR(\bm{\sigma}_{-i}) ~~\forall i.$$
% Thus, with Eqn. (\ref{ZeroSum})
% \begin{align*}
%     \textbf{NashD}(\bm{\sigma}) =&\sum_{i\in N} u_i(BR(\bm{\sigma}_{-i}),\bm{\sigma}_{-i}) \\
%     =&\sum_{i\in N} u_i(\bm{\sigma}_i, \bm{\sigma}_{-i}) \\
%     =&0
% \end{align*}
% Therefore, if $\textbf{NashD}(\bm{\sigma}) = \sum_{i\in N} u_i(BR(\bm{\sigma}_{-i}),\bm{\sigma}_{-i}) > 0$, the strategy profile $\bm{\sigma}$ is a not Nash equilibrium. 
\end{proof}

% In conclusion, $\textbf{NashD}(\bm{\sigma})$ is the sufficient and necessary condition for a Nash equilibrium strategy $\bm{\sigma}$,
% that is, if and only if
% \begin{align}\label{Cequilibrium}
%     \textbf{NashD}(\bm{\sigma}) = \sum_{i\in N} u_i(BR(\bm{\sigma}_{-i}),\bm{\sigma}_{-i}) = 0
% \end{align}
% the strategy profile $\bm{\sigma}$ is a Nash equilibrium.

% Finally, according to Lemma \ref{BRPureMax},
% \begin{align*}
%     \textbf{NashD}(\bm{\sigma}) =&\sum_{i\in N} u_i(BR(\bm{\sigma}_{-i}),\bm{\sigma}_{-i}) \\
%     =&\sum_{i\in N} \max_{a_i\in A_i} u_i(a_i, \bm{\sigma}_{-i})
% \end{align*}

% With similar argument, we know that if $\bm{\sigma}$ satisfies $\textbf{NashD}(\bm{\sigma}) < \epsilon$ for some constant $\epsilon>0$, then $\bm{\sigma}$ is an $\epsilon$-NE.
% This means the $\textbf{NashD}$ function is actually measuring the distance between a strategy profile and an NE.
% As a result, the problem of computing an NE can be transferred as the one of minimizing the $\textbf{NashD}$ function.

% Moreover, with similar argument, we know that if for an strategy profile $\bm{\sigma}$, the function $\textbf{NashD}(\bm{\sigma}) < \epsilon$ for some constant $\epsilon$, then $\bm{\sigma}$ is an $\epsilon$-approximate Nash equilibrium.

\section{Distance Minimization over Strategy Space}
In this section, we propose a gradient-based method for finding approximate Nash equilibrium in players' joint strategy space of any given N-Player general-sum game. 
Furthermore, we present the convergence analysis of this gradient descent procedure.

\subsection{The Gradient Descent Algorithm}
% As demonstrated in Section 3, the \textbf{NashD} function is measuring the distance between a feasible solution and an exact Nash equilibrium. 
% There has been many complete methods that adopt different points of views to search through the whole strategy space, including the famous Lemke-Howson algorithm~\cite{lemke1964equilibrium}, Support Enumeration methods, and Simplicial Subdivision method~\cite{van1987simplicial}. 
% Opposite to these search methodology, we transform the brute force search into an optimization problem of a proper defined function. 
% As a result, we are aiming at finding an approximate equilibria as closer to one of the exact equilibria as possible, but not finding all or one specified Nash equilibrium.
% Consequently, finding an approximate equilibrium can be transformed into an optimization procedure.
Now we introduce the optimization procedure to find an approximate equilibrium based on the $\textbf{NashD}$ function.

% \paragraph{Preprocessing: Transform N-Player general-sum game to (N+1)-Player zero-sum game.}

% %% definition for zero-sum game and general-sum game
% % A game is called a zero-sum game if the summation of all players' utilities is zero for any strategy profile, and is called a general-sum game if the summation is not a constant.
% For an arbitrary $N$-player game, the summation of all players' utilities may not always be zero, which is also referred to as a general-sum game.
% To leverage the $\textbf{NashD}$ function, we can simply add a fictitious player, who has only one action and obtain an utility equal to the reciprocal of the sum of all other players' utilities.
% Then the game is transferred into an $(N+1)$-player zero-sum game.
% Thus, in the following analysis, we only focus on $N$-player zero-sum game and design the corresponding approach for any approximate Nash Equilibrium.

% \paragraph{Gradient descent approach.} 
The concrete process of our method is presented in Alg. \ref{alg:NashD_GD}. 
Firstly, in the input game $G$, the summation of all players' utilities may not always be zero, which is also referred to as a general-sum game. Thus, we need first convert it to a zero-sum game $G'$ by adding a fictitious player. The new player's payoff is always the reciprocal value of summing all other players' utility in $G$. 
After conversion, the algorithm randomly selects an initial point in the solution space by sampling randomized real numbers. 
On each iteration of the gradient descent, the origin real-number strategy vector will be projected as a probability profile by applying the softmax function. 
Then compute the \textbf{NashD} measure and its gradient. 
The gradient descent subprogram on \textbf{NashD} function executes for a fixed number of iterations and outputs the ending strategy profile, which provides an approximate Nash equilibrium.
\begin{algorithm}[t]
\caption{NashD gradient descent for N-player general-sum game}\label{alg:NashD_GD}
\LinesNumbered
% \KwIn{number of players $N$, number of actions $m_i=|A_i|$ for each player, payoff matrix $U_i$ for each player that is a high dimensional array shaped in $(m_1,m_2,...,m_n)$, number of iterations $T$}
\KwIn{Game $G=(N, \{A_i\}_{i=1}^N, \{u_i(\cdot)\}_{i=1}^{N})$, maximal iteration rounds $T$, learning rate $\alpha$}
\KwOut{An approximate Nash equilibrium strategy profile $\bm{\sigma}^*$}
Add a fictitious player who has a single action $a_{N+1}$ and obtain utility $u_{N+1}(\bm{\sigma},a_{N+1})=-\sum_i^N u_i(\bm{\sigma})$ \;
% Transform payoff matrix from $U_i$ to $U_i^z$, $U_i^z$ is a zero-sum payoff matrix shaped in $(m_1,m_2,...,m_n,1)$, $U_{add}^z(a_1,a_2,...,a_n,a_{add})=-\sum_i^n U_i(a_1,a_2,...,a_n)$ \;
Randomly initialize strategy vector $\bm{\sigma}^{\prime,0}_i\in \mathbb{R}^{|A_i|}$ for $i=1,\dots,N+1$\  \;
\For{$t < T$}{
    Softmax on $\bm{\sigma}^{\prime,t}$ to get the strategy profile $\bm{\sigma}^t$, $\forall i$, $\bm{\sigma}^t_i(a_{ik})=\frac{Exp(\bm{\sigma}^{\prime,t}(a_{ik}))}{\sum_{k=1}^{|A_i|}Exp(\bm{\sigma}^{\prime,t}(a_{ik}))}$\;
    % \For{$i \leq N+1$}{
    % % Compute expected utility $u_i(\bm{\sigma}_i^t,\bm{\sigma}_{-i}^t)$ \;
    %     \For{$k \leq |A_i|$}{
    %     Compute pure strategy utility $u_i(a_i^k,\bm{\sigma}_{-i}^t)$ \;
    %     }
    % }
    Compute \textbf{NashD} funtion $\textbf{NashD}(\bm{\sigma}^t)=\sum_i^{N+1}{\max_k{u_i(a_i^k,\bm{\sigma}_{-i}^t)}}$ \;
    Update $\bm{\sigma}^{\prime,t+1}(a_i^k) = \bm{\sigma}^{\prime,t}(a_i^k) - \alpha \cdot \nabla_{\bm{\sigma}^{\prime,t}(a_i^k)}\textbf{NashD}$ \;
}
Let strategy profile $\bm{\sigma}^* = \mbox{softmax}(\bm{\sigma}^{\prime,T})$
\end{algorithm}

% \begin{algorithm}[tb]
% \caption{Example algorithm}
% \label{alg:algorithm}
% \textbf{Input}: Your algorithm's input\\
% \textbf{Parameter}: Optional list of parameters\\
% \textbf{Output}: Your algorithm's output
% \begin{algorithmic}[1] %[1] enables line numbers
% \STATE Let $t=0$.
% \WHILE{condition}
% \STATE Do some action.
% \IF {conditional}
% \STATE Perform task A.
% \ELSE
% \STATE Perform task B.
% \ENDIF
% \ENDWHILE
% \STATE \textbf{return} solution
% \end{algorithmic}
% \end{algorithm}

% \vspace{-0.5in}
\subsection{Convergence Analysis}
For convenience, we directly analyze the convergence of the strategy profile $\bm{\sigma}$. 
The proof consists of three parts:
we first show the convexity of each player's utility function implies the convexity of the $\textbf{NashD}$ function;
then we show its gradient is \emph{Lipschitz} continuous; finally, our approach will converge to a local optimum of the $\textbf{NashD}$ function, which follows directly from the convergence of the gradient descent algorithm~\cite{chong2013introduction}.
Note that we only show the convergence of our algorithm under a convex assumption, which is a common assumption in gradient-based literature. 
Although the convergence without any convex assumption is not theoretically proved, %we confirm it in our experiments, as the multi-player games generated in GAMUT are not necessarily with convex utility functions.
our experiments on GAMUT multi-player games show that our approach still converges without the convex condition for utility functions.
Our algorithm converges and obtains well-approximated equilibrium for both game classes with convex and general utility functions.

\begin{theorem}\label{thm:NashD_cvg}
In a game $G$ where the utility function $u_i$ of each player $i$ is convex, there exists a constant $L_{G}$ that, with the learning rate $\gamma_t\leq 1/L_{G}$, the gradient descent algorithm locally minimizes function $\textbf{NashD}(\bm{\sigma})$ with convergence rate $O(L_{G}/T)$ where $T$ is the number of iterations.
\end{theorem}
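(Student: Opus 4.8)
The plan is to follow the three-part structure indicated in the text: (i) deduce convexity of $\textbf{NashD}$ from convexity of the $u_i$, (ii) exhibit a constant $L_G$ for which the gradient of $\textbf{NashD}$ is $L_G$-Lipschitz, and (iii) invoke the standard convergence rate of gradient descent on a convex, $L_G$-smooth objective. For step (i), fix a player $i$ and a pure action $a_i \in A_i$; the map $\bm{\sigma} \mapsto u_i(a_i, \bm{\sigma}_{-i})$ is the restriction of the convex function $u_i$ to the slice $\{\bm{\sigma}_i = a_i\}$, hence convex. The pointwise maximum $\max_{a_i \in A_i} u_i(a_i, \bm{\sigma}_{-i})$ of finitely many convex functions is convex, and summing over $i \in N$ preserves convexity, so $\textbf{NashD}$ is convex on the product of simplices $\prod_i \Delta(A_i)$, which is itself a convex set. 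Together with Theorem \ref{Ctheorem}, which gives $\textbf{NashD} \ge 0$ with equality exactly at the Nash equilibria, convexity ensures that any local minimizer is global and attains value $0$; this is precisely what lets ``locally minimizes'' coincide with ``reaches an NE.''

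For step (ii), I would use that each $u_i(a_i, \bm{\sigma}_{-i})$ is multilinear in the probability coordinates, hence a polynomial whose first and second derivatives are bounded on the compact domain $\prod_i \Delta(A_i)$ by constants depending on the largest payoff magnitude and on $N$ and $\{|A_i|\}$. Taking $L_G$ to be the maximum over all players and all ``active-action'' branches of these per-branch smoothness bounds yields a uniform Lipschitz constant for the gradient on each smooth piece. For step (iii), I would then cite the classical estimate (e.g. \cite{chong2013introduction}) that for a convex, $L_G$-smooth objective, gradient descent with step size $\gamma_t \le 1/L_G$ satisfies $\textbf{NashD}(\bm{\sigma}^T) - \textbf{NashD}^\ast = O\!\left( L_G \lVert \bm{\sigma}^0 - \bm{\sigma}^\ast \rVert^2 / T \right)$; since $\textbf{NashD}^\ast = 0$ and the feasible region is bounded, this collapses to the claimed $O(L_G / T)$ rate.

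The main obstacle is the non-differentiability introduced by the inner $\max$. Strictly speaking, $\textbf{NashD}$ is only piecewise smooth, with kinks where the maximizing action of some player changes, so its gradient is neither globally defined nor globally Lipschitz, and the $O(1/T)$ rate is the rate for genuinely smooth convex functions rather than the $O(1/\sqrt{T})$ rate that subgradient descent on a non-smooth convex function would give. Reconciling this is the crux. I would attempt it by either arguing that the iterates eventually remain in a region on which each player's argmax is locally constant, so that $\textbf{NashD}$ there agrees with a single smooth branch to which the $L_G$-smooth analysis applies, or by replacing the hard $\max$ with the smooth surrogate already implicit in the softmax reparametrization of Algorithm \ref{alg:NashD_GD} and controlling the resulting approximation error. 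Making the first route rigorous — in particular ruling out that the active branch switches indefinitely near the optimum — is the delicate point I expect to dominate the argument.
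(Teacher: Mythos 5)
Your three steps mirror the paper's own proof almost exactly: the paper's Lemma~\ref{lemma:convex_l_continous} proves convexity of $\textbf{NashD}$ by the same slice-restriction-plus-pointwise-max argument you give, extracts a Lipschitz constant $L \leq 2UN\cdot(|A_1|+\cdots+|A_N|)$ (with $U$ the largest payoff magnitude) from essentially your multilinearity/bounded-payoff computation, and then Theorem~\ref{thm:NashD_cvg} is concluded by citing the standard smooth-convex gradient-descent rate from \cite{chong2013introduction}, exactly as in your step (iii).

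The one place you go beyond the paper is the issue you flag as the crux --- and you should know that the paper does not resolve it. In the Lipschitz part of Lemma~\ref{lemma:convex_l_continous}, the paper writes the partial derivative of $\textbf{NashD}_i$ at $\bm{\sigma}$ as the payoff $u_i(\vec{a_{il^*}},\vec{a_{jk}},\bm{\sigma}_{-(i,j)})$ for the maximizing action $l^*$, and then bounds the difference of gradients at $\bm{\sigma}^1$ and $\bm{\sigma}^2$ using the \emph{same} $l^*$ at both points. This tacitly assumes the active branch of the max does not change between the two points. Exactly at the kinks you describe, where the argmax switches, the two gradients belong to different branches and their difference contains a jump term that does not shrink with $\|\bm{\sigma}^1-\bm{\sigma}^2\|$ (compare $\max(x,-x)$ near the origin), so global Lipschitz continuity of $\nabla\textbf{NashD}$ fails in general and the smooth-convex $O(L/T)$ rate is not literally applicable. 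Your instinct that this is the delicate point is correct; neither of your proposed repairs (showing the iterates are eventually trapped in a single smooth branch, or replacing the hard max by the softmax surrogate of Algorithm~\ref{alg:NashD_GD} and controlling the surrogate error) appears in the paper, and carrying either one out rigorously would be a genuine strengthening of the published argument rather than a reproduction of it.
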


To prove Theorem~\ref{thm:NashD_cvg}, we first show the \textit{Lipshitz} continuity of the gradient of the $\textbf{NashD}(\bm{\sigma})$ function under the convexity assumption on each player's utility function.

\begin{lemma}\label{lemma:convex_l_continous}
In a game $G$ where the utility function $u_i$ of each player $i$ is convex, the function $\textbf{NashD}(\bm{\sigma})$ is convex and its gradient $\nabla_{\bm{\sigma}} \textbf{NashD}$ is \emph{Lipschitz} continuous with constant $L > 0$.
\end{lemma}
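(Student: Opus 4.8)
The plan is to prove the two assertions in turn, beginning with convexity, which is the more routine half. Fixing any pure action $a_i \in A_i$, the map $\bm{\sigma} \mapsto u_i(a_i, \bm{\sigma}_{-i})$ is convex by the hypothesis that $u_i$ is convex (it happens to depend on $\bm{\sigma}$ only through the block $\bm{\sigma}_{-i}$, but this is irrelevant to convexity). Since $A_i$ is finite, the inner term $\max_{a_i \in A_i} u_i(a_i, \bm{\sigma}_{-i})$ is a pointwise maximum of finitely many convex functions, hence convex; and because a finite sum of convex functions is convex, $\textbf{NashD}(\bm{\sigma}) = \sum_{i \in N} \max_{a_i \in A_i} u_i(a_i, \bm{\sigma}_{-i})$ is convex. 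This settles the first claim without any calculation.

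For the \emph{Lipschitz} continuity of the gradient I would first exploit compactness. Each $u_i(a_i, \bm{\sigma}_{-i})$ is, by the expected-utility formula, a polynomial in the entries of $\bm{\sigma}$, hence twice continuously differentiable, and the domain $\prod_{j \in N} \Delta(A_j)$ is compact; therefore the operator norm of its Hessian is bounded on this domain by some $L_i < \infty$, which makes $\nabla u_i(a_i, \cdot)$ \emph{Lipschitz} with constant $L_i$. Write $g_i(\bm{\sigma}) = \max_{a_i \in A_i} u_i(a_i, \bm{\sigma}_{-i})$. On any open region where a single pure action $a_i^\star$ strictly attains the inner maximum, $g_i$ agrees with the smooth function $u_i(a_i^\star, \cdot)$, so $\nabla g_i$ exists there and is $L_i$-\emph{Lipschitz}. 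Taking $L = \sum_{i} L_i$ (or $\max_i L_i$, up to the norm convention) yields a single constant that bounds the \emph{Lipschitz} behaviour of $\nabla \textbf{NashD} = \sum_i \nabla g_i$ on every such smooth region.

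The genuinely delicate point, and the step I expect to be the main obstacle, is the behaviour on the boundaries between these regions, namely the set where the inner $\arg\max$ is not unique: there $g_i$ has a kink, the classical gradient is undefined, and it jumps across the boundary, so \textit{Lipschitz} gradient cannot be read literally. I would resolve this by passing to the Clarke generalized gradient, whose value at a kink is the convex hull of the gradients $\nabla u_i(a_i^\star, \cdot)$ over the active maximizers, and by invoking Danskin's theorem to describe the directional derivatives of $g_i$ through these same active gradients; the uniform bound $L$ established above then controls the requisite selection over the whole compact domain, while the exceptional kink set is of measure zero. An alternative I would keep in reserve is to replace the hard maximum by its log-sum-exp smoothing, which is globally $C^2$ with an explicitly controllable \emph{Lipschitz} constant, prove the bound for the surrogate, and recover the statement in the zero-temperature limit. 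Either route delivers the claimed constant $L > 0$, but it is precisely this handling of the non-smoothness introduced by the maximum operator that carries the weight of the lemma.
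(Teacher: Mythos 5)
Your convexity argument is correct and is essentially the paper's own: fix $a_i$, use the convexity hypothesis to conclude that $\bm{\sigma}\mapsto u_i(a_i,\bm{\sigma}_{-i})$ is convex (the hypothesis is genuinely needed here, since $\bm{\sigma}_{-i}(a_{-i})=\prod_{j\neq i}\bm{\sigma}_j(a_j)$ is multilinear rather than linear in the profile), then take a finite pointwise maximum and a finite sum. The paper merely unrolls these two closure properties into explicit inequality chains.

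The gap is in the second half, and it sits exactly where you predicted it would --- but your proposed repairs cannot close it. Once the inner maximizer is allowed to switch, Lipschitz continuity of the gradient fails in the classical sense, not merely as a technicality: in the one-dimensional model $\max(x,-x)=|x|$, two points straddling the kink at distance $2\epsilon$ have gradients differing by $2$, so no finite $L$ works. All three of your escape routes founder on this. The measure-zero remark is irrelevant, because gradient Lipschitzness is a statement about every pair of points, not an almost-everywhere statement; the Clarke subdifferential is not Lipschitz as a set-valued map either, since its Hausdorff distance across a kink jumps by the gap between the two active gradients; and the log-sum-exp surrogate at temperature $\tau$ has a gradient-Lipschitz constant of order $1/\tau$, which diverges in precisely the zero-temperature limit you invoke. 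So the proposal, as written, cannot be completed into a proof of the literal statement.

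It is worth knowing how the paper handles this same point: it computes $\partial\textbf{NashD}_i/\partial\bm{\sigma}_{jk}=u_i(\vec{a_{il^*}},\vec{a_{jk}},\bm{\sigma}_{-(i,j)})$ and bounds its variation between $\bm{\sigma}^1$ and $\bm{\sigma}^2$ by swapping one player's strategy at a time, which yields the explicit constant $2UN\cdot(|A_1|+\cdots+|A_N|)$ --- but it silently uses the same maximizer $l^*$ at both points, i.e., it proves the bound only on regions where the active action does not change, glossing over exactly the obstruction you isolated. The one way to make the lemma rigorous under its stated hypothesis is an observation neither you nor the paper makes: an expected utility is multilinear in the profile, so its Hessian has zero diagonal blocks, and a positive semidefinite symmetric matrix with zero diagonal is zero; hence a convex multilinear $u_i$ is necessarily affine on the product of simplices. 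Then $u_i(a_i,\bm{\sigma}_{-i})$ differs across actions $a_i$ only by constants, the maximizing action never switches, and $\textbf{NashD}$ is affine with constant (trivially Lipschitz) gradient --- the kink set you worry about is empty. Without the convexity assumption, which is the regime in which the algorithm is actually deployed, the gradient-Lipschitz claim is simply false at kinks, and no amount of nonsmooth machinery recovers it.
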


\begin{proof}
By definition, we discuss the function
$$\textbf{NashD}(\bm{\sigma}) = \sum\limits_{i\in N} \max\limits_{a_i\in A_i}u_i(a_i,\bm{\sigma}_{-i}):= \sum\limits_{i\in N}\textbf{NashD}_i(\bm{\sigma})$$
by analyzing the convexity of each item in the summation.
Note that for each $i$, the item 
\begin{align}\label{eqn:nashd_separate}
\textbf{NashD}_i(\bm{\sigma}) = &\max\limits_{a_i\in A_i}u_i(a_i,\bm{\sigma}_{-i}) \\
= &\max\limits_{a_i\in A_i}\sum\limits_{a_{-i}\in A_{-i}}u_i(a_i,a_{-i})\bm{\sigma}_{-i}(a_{-i})    
\end{align}
where $\sum_{a_{-i}\in A_{-i}}\bm{\sigma}_{-i}(a_{-i}) = 1$.
In other words, it is always a convex combination of utilities when other players plays pure strategies.

Now considering two joint policy, $\bm{\sigma}^{1} = (\bm{\sigma}^1_{1},\dots,\bm{\sigma}^1_{N})$ and $\bm{\sigma}^{2} = (\bm{\sigma}^2_{1},\dots,\bm{\sigma}^2_{N})$, and any $\lambda \in [0,1]$.
For each player $i$, each fixed action $a_i$ can be regarded as a special mixed strategy $\bm{\sigma}_1'$, then we have
\begin{align*}
&u_i(a_i,\lambda\bm{\sigma}^1_{-i}+(1-\lambda)\bm{\sigma}^2_{-i})\\ 
=&u_i(\lambda(\bm{\sigma}_1',\bm{\sigma}^1_{-i})+(1-\lambda)(\bm{\sigma}_1',\bm{\sigma}^2_{-i}))\\ \nonumber
\leq & \lambda u_i(\bm{\sigma}_1',\bm{\sigma}^1_{-i}) + +(1-\lambda)u_i(\bm{\sigma}_1',\bm{\sigma}^2_{-i})\\
=& \lambda u_i(a_i,\bm{\sigma}^1_{-i})+ (1 - \lambda) u_i(a_i,\bm{\sigma}^2_{-i})
\end{align*}
by the convexity of the utility function. 
Correspondingly,
\begin{align*}
    &\textbf{NashD}_i(\lambda \bm{\sigma}^1 + (1 - \lambda) \bm{\sigma}^2) \\
    \leq& \max\limits_{a_i\in A_i}\sum\limits_{a_{-i}\in A_{-i}}u_i(a_i,a_{-i})[\lambda \bm{\sigma}^1(a_{-i}) + (1 - \lambda) \bm{\sigma}^2(a_{-i})]\\
    =& \max\limits_{a_i\in A_i} [\lambda\cdot\sum\limits_{a_{-i}\in A_{-i}}u_i(a_i,a_{-i})\bm{\sigma}^1(a_{-i}) \\
    &~~~~~~~~~ + (1 - \lambda)\cdot \sum\limits_{a_{-i}\in A_{-i}}u_i(a_i,a_{-i})\bm{\sigma}^2(a_{-i})]\\
    \leq & \max\limits_{a_i\in A_i} [\lambda\cdot \max\limits_{a^1_i\in A_i}\sum\limits_{a_{-i}\in A_{-i}}u_i(a^1_i,a_{-i})\bm{\sigma}^1(a_{-i}) \\
    &~~~~~~~~~ + (1 - \lambda)\cdot \max\limits_{a^2_i\in A_i}\sum\limits_{a_{-i}\in A_{-i}}u_i(a^2_i,a_{-i})\bm{\sigma}^2(a_{-i})]\\
    =& \max\limits_{a_i\in A_i} [\lambda\cdot \textbf{NashD}_i(\bm{\sigma}^1) + (1 - \lambda)\cdot\textbf{NashD}_i(\bm{\sigma}^2)]\\
    =& \lambda\cdot \textbf{NashD}_i(\bm{\sigma}^1) + (1 - \lambda)\cdot\textbf{NashD}_i(\bm{\sigma}^2).
\end{align*}
This complete the proof for the convexity.

Now we prove the \emph{Lipschitz} continuity for the gradient $\nabla_{\bm{\sigma}} \textbf{NashD}$. We first denote
 $$U=\max_{i,a_i\in A_i,a_{-i}\in A_{-i}}|u_i(a_i,a_{-i})|$$
 as the maximum utility any player may obtain. 
(By Lemma~\ref{BRPureMax}, we only need to consider pure strategies.)
In fact, the joint policy $\bm{\sigma}$ is in a subspace of
% $[0,1]^{|A_1|\times \ldots \times|A_N|}$
$[0,1]^{|A_1|+ \cdots +|A_N|}$.
For readability, we specify a notation for the joint policy $\bm{\sigma} = (\bm{\sigma}_{11},\dots,\bm{\sigma}_{1|A_1|},\dots,\bm{\sigma}_{N1},\dots,\bm{\sigma}_{N|A_N|})$ with $\sum_{j}\bm{\sigma}_{ij}=1$ for each $i$.
$a_{il}\in A_i$ denotes the $l$-th pure strategy in $A_i$, which is represented by an one-hot vector denoted by $\vec{a_{il}}$.

Next, we prove that $||\nabla \textbf{NashD}(\bm{\sigma}^1) - \nabla \textbf{NashD}(\bm{\sigma}^2)|| \leq L||\bm{\sigma}^1 - \bm{\sigma}^2||$ for some constant $L < \infty$.
\footnote{Here we only consider $1$-norm as $||\cdot||$. One may define \emph{Lipschitz} continuity in $2$-norm, while they are equivalent except a constant.}
We do so through analyzing $\textbf{NashD}_i$ for each $i$ first.
According to Eqn. (\ref{eqn:nashd_separate}), the value of the $\bm{\sigma}_{jk}$-th dimension of the gradient $\nabla_{\bm{\sigma}} \textbf{NashD}_i$ is either $0$ (if $i=j$) or an utility corresponding to a group of strategies (if $i\neq j$).
We focus on the later case, and specify the notation $(\vec{a_{jk}},\bm{\sigma}_{-(i,j)})$ represent the joint strategy of players except player $i$, where player $j$ plays its $k$-th action while other players play the same strategies as $\bm{\sigma}_{-i}$.
In precise, 
\begin{align*}
\frac{\partial \textbf{NashD}_i}{\partial \bm{\sigma}_{jk}}|_{\bm{\sigma}} = u_i(\vec{a_{il^*}},\vec{a_{jk}},\bm{\sigma}_{-i,-j})
\end{align*}
where $l^*=\mbox{argmax}_{l\in \{1,2,\dots,|A_i|\}}u_i(\vec{a_{il}},\vec{a_{jk}},\bm{\sigma}_{-(i,j)})$.

Specifically for $\bm{\sigma}^1$ and $\bm{\sigma}^2$, we have
\begin{align*}
& |u_i(\vec{a_{il^*}},\vec{a_{jk}},\bm{\sigma}^1_{-i,-j}) -u_i(\vec{a_{il^*}},\vec{a_{jk}},\bm{\sigma}^2_{-i,-j})| \\
\leq & |u_i(\vec{a_{il^*}},\vec{a_{jk}}, \bm{\sigma}^1_{1}, \bm{\sigma}^2_{-(i,j,1)}) -u_i(\vec{a_{il^*}},\vec{a_{jk}}, \bm{\sigma}^2_{1}, \bm{\sigma}^2_{-(i,j,1)})|\\
+ & |u_i(\vec{a_{il^*}},\vec{a_{jk}}, \bm{\sigma}^1_{1}, \bm{\sigma}^1_{-(i,j,1)}) -u_i(\vec{a_{il^*}},\vec{a_{jk}}, \bm{\sigma}^1_{1}, \bm{\sigma}^2_{-(i,j,1)})|\\
\leq & 2U \cdot \max\limits_{l}|\bm{\sigma}^1_{1l}-\bm{\sigma}^2_{1l}| \\
+ & |u_i(\vec{a_{il^*}},\vec{a_{jk}}, \bm{\sigma}^1_{1}, \bm{\sigma}^1_{-(i,j,1)}) -u_i(\vec{a_{il^*}},\vec{a_{jk}}, \bm{\sigma}^1_{1}, \bm{\sigma}^2_{-(i,j,1)})|\\
\leq & \dots 
\leq 2U\cdot \sum\limits_{t\neq i,j}\max\limits_{l}|\bm{\sigma}^1_{tl}-\bm{\sigma}^2_{tl}|
\leq 2U \cdot ||\bm{\sigma}^1 - \bm{\sigma}^2||
\end{align*}
Then we have 
% $$||\nabla C(\bm{\sigma}^1) - \nabla C(\bm{\sigma}^2)|| \leq 2U\cdot |A_1|\ldots|A_N| \cdot ||\bm{\sigma}^1 - \bm{\sigma}^2||.$$
\begin{align*}
    &||\nabla \textbf{NashD}(\bm{\sigma}^1) - \nabla \textbf{NashD}(\bm{\sigma}^2)|| \\
    \leq~& 2UN\cdot (|A_1|+\ldots+|A_N|) \cdot ||\bm{\sigma}^1 - \bm{\sigma}^2||.
\end{align*}
This complete the proof for the \emph{Lipschitz} continuity.
\end{proof}

Finally, we can prove Theorem~\ref{thm:NashD_cvg}.

% \begin{proof}[Proof~of~Theorem~\ref{thm:NashD_cvg}]
\begin{proof}
By Lemma~\ref{lemma:convex_l_continous}, the function $\textbf{NashD}(\bm{\sigma})$ is $L$-\emph{Lipschitz} continuous for a constant $L\leq2UN\cdot (|A_1|+\dots+|A_N|)$. According to standard analysis as~\cite{chong2013introduction}, with the learning rate $\gamma_{t} \leq 1 / L$, our approach will converge to a local optimum of the \textbf{NashD} function, and the convergence rate of the gradient descent algorithm is $O(L/T)$ where $T$ is the number of iterations.
\end{proof}

% \clearpage

\section{Experiments}
% In this section, we first introduce the state-of-the-art or widely applied NE computing algorithms and values of the hyper-parameters; subsequently, we discuss the experimental results on a series of games, demonstrating the gradient-based approach can work effectively and robustly. 
To demonstrate the performance of our gradient-based approach, we test it in a series of games comparing with both the state-of-the-art and the widely applied baseline algorithms.

\subsection{Setup and Baseline Algorithms}

% To illustrate the performance of our approach, we test it in a series of games, comparing with baseline algorithms.
% We begin with a simple game, general Rock-paper-scissors, for which some baseline algorithms may fail.
% Then we consider a group of randomly generated games, and further check the robustness of our approach by increasing the game size.
% Finally, we apply our method on the game instance generated by the comprehensive GAMUT suite of game generators designated for testing game-theoretic algorithms~\cite{nudelman2004run}.

In all the experiments, we show the $\epsilon$-approximation value across our approach and baselines. As defined in Section \ref{notation}, we regard smaller approximation results as better solutions to NE. To make the numerical comparison consistent, we normalize the utility functions to $u_i: \bm{\sigma}  \rightarrow [0,1]$ for all tested game instances. 
% In all the experiments, payoffs of players are normalized to $[0,1]$ for the necessity to compare the $\epsilon$ approximation guarantees in a consistent meaning.
The number of iteration rounds is set to be $T=1000$ for all compared algorithms. 
% The maximal iteration rounds  for all algorithms.
For our NashD gradient descent algorithm, we adopt a decay schedule on the learning rate which initializes $\alpha=0.5$ and multiplies the decay rate 0.8 every 100 iterations.

\paragraph{Tsaknakis-Spirakis (TS) algorithm.}
% To the best of our knowledge, \cite{tsaknakis2007optimization} is the most advanced theoretical work that proves their procedure can compute 0.3393-approximate Nash equilibrium for 2-player general-sum games in polynomial time. 
The algorithm proposed by \cite{tsaknakis2007optimization} (TS for short) is proven to be obtain state-of-the-art theoretical approximation bound of 0.3393 for 2-player general-sum games in polynomial time.
Note that the bound is for the worst cases and we will see in our experiments that the practical performance of TS is much better than this bound. 
% However, TS cannot guarantee the convergence to global NE either because the bound is based on stationary points. 
However, TS cannot be applied to games with more than two players.
We set the TS hyper-parameter $\delta=0.001$ which follows the same setting in the origin empirical testing~\cite{tsaknakis2008performance}. 

\paragraph{Fictitious play (FP).} We implement ficititious play (FP for short) described by \cite{berger2007brown}. FP is known to converge for restricted game structures such as zero-sum games and potential games.
\cite{conitzer2009approximation} points out that though without convergence guarantee, FP outputs good approximation solutions in more general games.
% However, \cite{brandt2010rate} proves that the rate of convergence of FP may require an exponential number of rounds even for zero-sum games.

\paragraph{Regret matching (RM).} We implement regret matching (RM for short) described by \cite{hart2000simple}. In 2-player zero-sum games, RM converges to a $2\epsilon$-approximate Nash equilibrium where $\epsilon$ is the time-averaged regret~\cite{cesa2006prediction}. 

\subsection{Effectiveness and Robustness}

\begin{figure*}[t]
\center
\begin{minipage}{0.48\textwidth}
\centering
\includegraphics[width=\columnwidth]{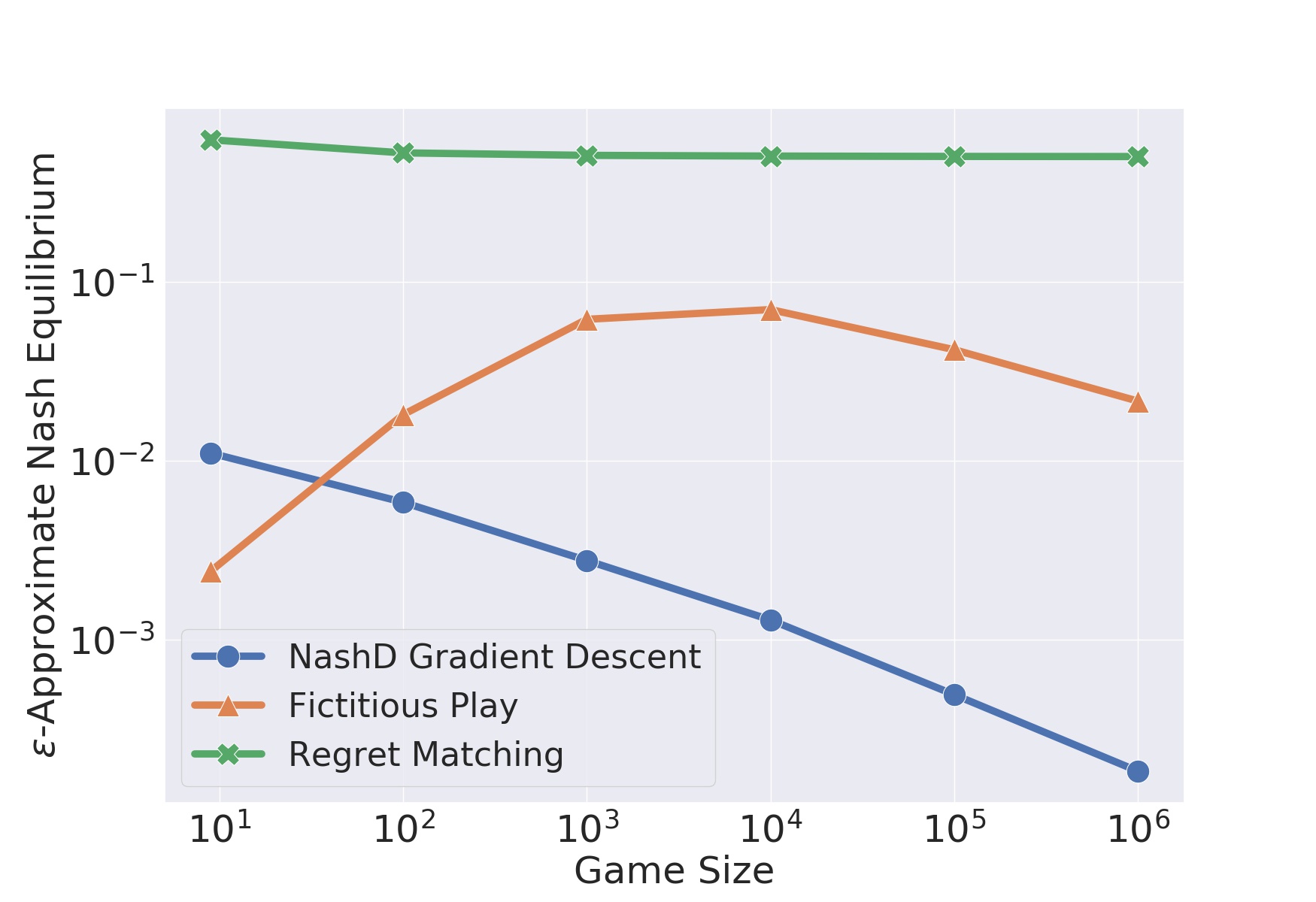}
\footnotesize{(a)}
\end{minipage}
\label{fig:gamesize}
\begin{minipage}{0.48\textwidth}
\centering
\includegraphics[width=\columnwidth]{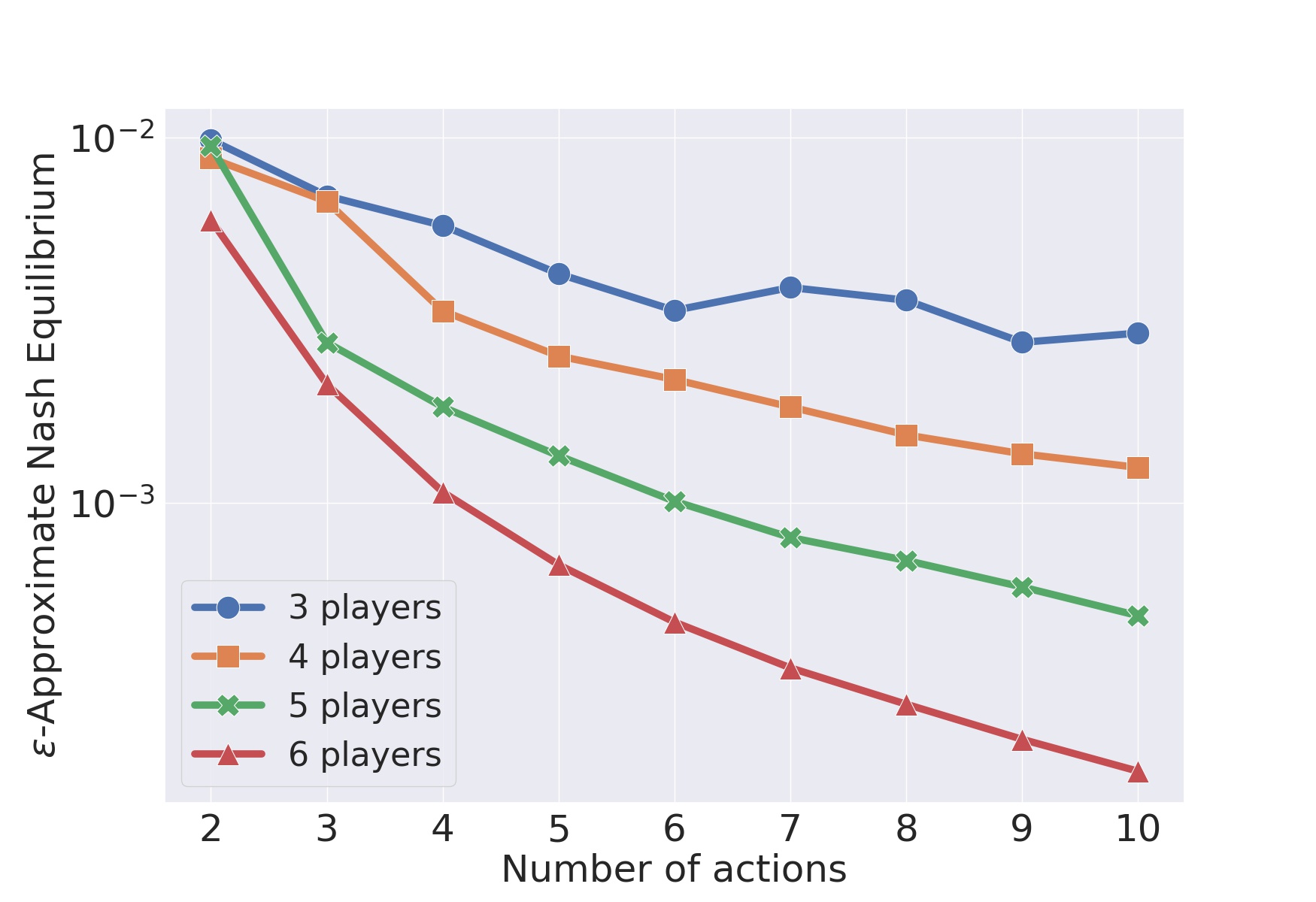}
\footnotesize{(b)}
\end{minipage}
\label{fig:robust}
\caption{(a) Comparison on random games with increasing game size. (b) NashD gradient descent on random games with increasing number of players and actions.
}
\label{fig:result}
\end{figure*}

\subsubsection{Random Game}
Random games are generated by sampling the utilities for every pure strategy profile from uniform distribution on $[0, 1]$.
%  We test the algorithms on randomly generated games where the utilities for every pure strategy profile are drawn independently from uniform distribution $[0, 1]$.
% Following the typical Rock-Paper-Scissors game, we test our algorithm as well as baselines in more general settings including general 2-player game and general n-player game. 
% Our method behave well in random games.
% We first adopts the game class RandomGame in GAMUT suite to generate random payoffs, where the payoff of each player for each pure strategy profile is drown from uniform distribution on $[0, 1]$ independently. 
We fix the number of actions to be $10$ and set the number of players ranging from $2$ to $6$. By doing so, we generate random games with game size increasing from $10^2$ to $10^6$. We also add a 2-player 3-action case (which has the game size of 9) to represent the game size of $10^1$. For each game size, we apply our method as well as baselines on $100$ game instances and take the average to compare the approximation performance as shown in Fig. \ref{fig:result} (a). 
% Specifically, we set the number of players $N$ ranges from 2 to 6 and fix number of actions for each player $|A_i|$ to be 10, and add one case that has 3 players and each of them has 2 actions.
% By this way, we obtain game instances whose payoff sizes ($|A_i|^{N}$) range from $8$ to $10^6$ in log-scale. 
% And for each pair of numbers of players and actions, 100 game instances are generated.
% We compute $\epsilon$ approximation value for algorithm's output solution profile and take average for comparison by performing 100 game instances for each case. 

Except for the 2-player 3-action case, our algorithm achieves the smallest approximation as game size increases.
TS provides similar results as ours on 2-player random games. We do not put it in the figure because TS is not applicable on the other N-player games.
As we can see, RM fails for most cases, which may due to the zero-sum invalidation and the large game sizes.
% All the results on randomly-generated games are shown in table \ref{tab:random}.
% For most cases, our algorithm obtains the best approximation to Nash equilibria.
% TS provides similar results as ours in 2-player games, but is not applicable in games with more players.
% FP also shows robustness as the payoff size increases, but its performance is worse than our method except for the cases that the number of actions is small.
% RM fails for most cases since the games no longer satisfy the zero-sum condition.
Further, we test the robustness of our approach by changing the number of players from $3$ to $6$ and increasing the number of actions from $2$ to $10$ for each player case. As shown in Fig. \ref{fig:result} (b), our algorithm maintains a small approximation regardless of the rise of the game.
% To further illustrate the robustness of our approach, we also test it for game instance with fixed number of players and increasing number of actions from $2$ to $10$.
% As we can see in Fig. \ref{fig:result} (c), our method also maintains robustness in this sense.
It is interesting to notice that in Fig. \ref{fig:result} (a) and (b), the approximation might become even better as the game size grows up.
% It is also interesting to notice that the performance of our algorithm become better as the game size increases.
% An intuitive guess for this phenomenon is that all the utilities are normalized to $[0,1]$.
\cite{fearnley2015empirical} found the similar discovery and conjectured that randomly-generated games cannot fully represent all game classes.

\begin{figure*}[t]
\center
\begin{minipage}{0.99\textwidth}
\centering
\includegraphics[width=1.0\columnwidth]{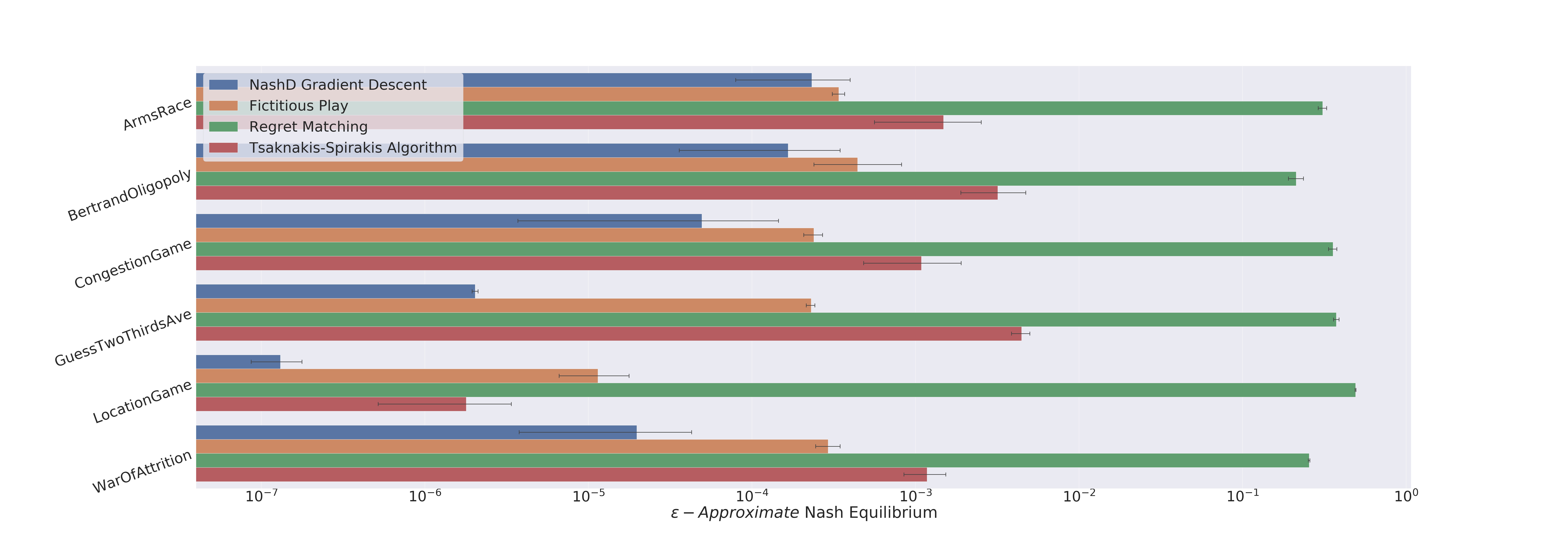}
\footnotesize{(a)}
\end{minipage}

\begin{minipage}{0.99\textwidth}
\centering
\includegraphics[width=1.0\columnwidth]{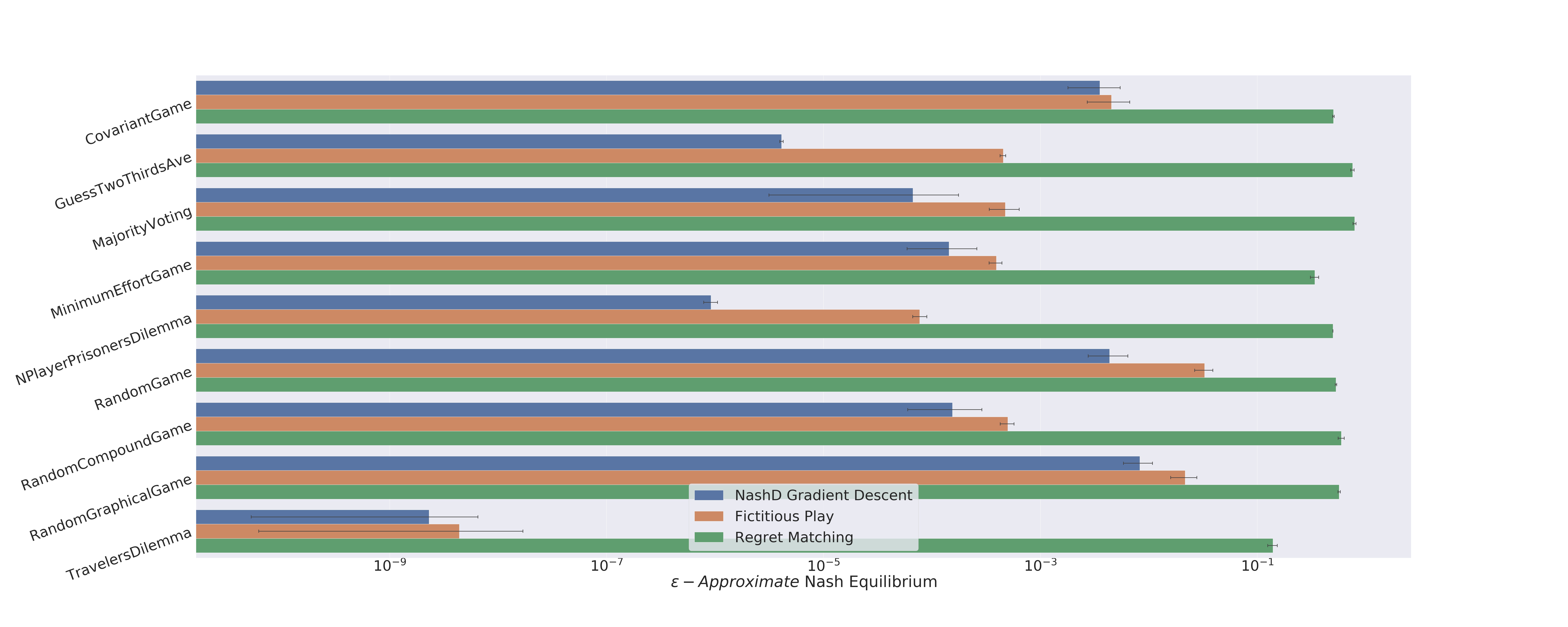}
\footnotesize{(b)}
\end{minipage}

\caption{Results on 2-player (a) and N-player (b) GAMUT game (the histogram illustrates the average approximation value over 100 game instances while the bars on it illustrate 0.95 confidence interval).}
\label{fig:result_gamut}
\end{figure*}

\subsubsection{GAMUT Game}

% Here we show Nash equilibrium solvers' performance on various distributions of games from GAMUT characterized by $\epsilon$ approximation value in fixed solving iterations. 
Finally, we test our algorithm on the comprehensive game suite GAMUT \cite{nudelman2004run} which contains various classes of games extracted from typical literature. 
% Finally, we apply our algorithm among GAMUT game classes and present the comparison results in Tab. \ref{tab:gamut}. 
% GAMUT includes a group of random distributions, based on each of which the payoff of each player for each pure strategy profile can be drown independently.
Since TS is only applicable for 2-player games, we divide the game classes into two groups.
For the 2-player game group, we fix the number of players ($2$), and select the number of actions for each player randomly (range from $2$ to $5$).
For the N-player game group, we select both the number of players and the number of actions randomly (range from $3$ to $5$).
% For each game class, 100 game instances are generated, and we compare the average $\epsilon$ approximation value for algorithm's output solution profile among these instance with baseline algorithms.
For each game class in GAMUT, we run our approach and baselines on 100 game instances and compare the average approximation value (shown in Fig. \ref{fig:result_gamut}).
% Tab \ref{tab:gamut_2p} and Tab. \ref{tab:gamut_np}).
% ( we leave the 2-player results in the appendix). 

As discussed by \cite{nudelman2004run}, the performance of game-theoretical algorithms always show large variation across different classes of GAMUT games.
Therefore, no single algorithm can behave as the best solution on all of the GAMUT games.
In spite of this intrinsic performance fluctuation across games, our approach obtains the best solutions on 9 types of N-player games, including the famous generalized N-player prisoners' dilemma game, majority voting game and congestion game.
Moreover, our algorithm always performs better on games that introduce randomness, as shown in random compound game and random graphical game.
Different from the random games we discussed above, compound game and graphical game have special properties which have been of importance in many applications.
It is in this sense we propose that gradient-based algorithms deserve more exploration on its fitted classes of games.
% % In fact, the results in \cite{conitzer2009approximation} have illustrated that FP can beat the 0.5-approximate algorithm for general bimatrix game proposed in \cite{daskalakis2006note}. 
% % Form our results here, we that even though the distributed approach like fictitious play does not get best approximation to equilibria compared to global methods, it performs good as well and that may explain why it was used such widely in adversarial scenarios.
It is worthy to notice that FP performs fairly well in most game instances which may explain why it was used such widely in adversarial scenarios.
On the contrast, RM cannot be regarded as a proper Nash equilibrium solver, since most strategies it obtains are not close enough to any Nash equilibrium in general games.

\section{Conclusion}
In this paper, we introduce a new measure to characterize the distance between any strategy profile and a Nash equilibrium.
The value of the distance function is always non-negative and its zero-points are precisely coincident with the Nash equtbrium.
Based on that, we design a gradient descent algorithm to find approximate Nash equilibrium for any N-player normal-form games with discrete action space.
We prove the convergence of our method and show its practical performance across GAMUT game suite.
\bibliographystyle{plain}  % Choose the bibliography style (plain, unsrt, etc.)
\bibliography{references}    % Include your .bib file here (without the .bib extension)

\end{document}